\newtheorem{assumption}{Assumption}
\newtheorem{theorem}{Theorem}
\newtheorem{lemma}{Lemma}
\newtheorem{prob}{Problem}
\title{Learning from similar systems and online data-driven LQR using iterative
randomised data compression}
\author{Vatsal Kedia, Sneha Susan George and Debraj Chakraborty
\thanks{The authors are with the Department of Electrical Engineering, Indian Institute of Technology Bombay, Mumbai, Maharashtra, India. 
		{\tt\small \{vatsalkedia,snehasg,dc\}@ee.iitb.ac.in}}
  }
\begin{document}

\maketitle
	\thispagestyle{empty}
	\pagestyle{empty}
\begin{abstract}
The problem of data driven recursive computation of receding horizon LQR control
through a randomized combination of online/current and historical/recorded data, is considered. It is assumed that large amounts
of historical input-output data from a system, which is similar but
not identical to the current system under consideration, is available.
This (possibly large) data set is compressed through a novel randomized
subspace algorithm to directly synthesize an initial solution
of the standard LQR problem, which however is sub-optimal due to the inaccuracy of the historical model. The first instance of this input is used to
actuate the current system and the corresponding instantaneous output is used
to iteratively re-solve the LQR problem through a computationally inexpensive randomized rank-one update of the old compressed data. The first instance of the re-computed input is applied to the system at the next instant, output recorded and the entire procedure is repeated at each subsequent instant. As more current data becomes available, the algorithm learns automatically from the new data while simultaneously controlling the system in near optimal manner. The proposed algorithm is computationally inexpensive due to the initial and repeated compression of old and newly available data. Moreover, the simultaneous learning and control makes this algorithm
particularly suited for adapting to unknown, poorly modeled and time
varying systems without any explicit exploration stage. Simulations
demonstrate the effectiveness of the proposed algorithm vs popular
exploration/exploitation approaches to LQR control.
\end{abstract}

\section{INTRODUCTION}

Conventional system identification involves collecting input-output (I/O)
data through controlled experiments and thereafter estimating model
parameters from this data using any one among a rich
variety of available algorithms (e.g. see \cite{Ljung_book}, \cite{BDMoor_SID_book}
and references therein). Any model based control algorithm can only
be started after this required phase of parametric identification
is completed successfully. However, the model thus formed is traditionally
linear time invariant while the actual system might be more complex,
non-linear, time varying or simply might have changed marginally
due to wear and tear from the time it was last identified. In such a situation, re-identification
is the recommended procedure, which however implies extra cost, effort
and postponement and/or stoppage of the actual controlled operation. Moreover conventional re-identification of model parameters is computationally expensive and may only be done
after sufficient new data is collected. All these issues predicate
that even when re-identification is absolutely essential, that it
is done at a rate which is several orders of magnitude slower than
the time constants of the controlled system. This leads to accumulation of errors
in the control sequences, sub-optimal performances and in extreme cases, might lead to instability of the closed loop. To address
the above issues, we propose a compressed data driven iterative technique
to directly compute the optimal control at each instant by seamlessly
combining past recorded data with the "online" I/O data as and when it becomes available
at the current instant of time. The inbuilt data compression
combined with the iterative nature of the control sequence computation
makes our technique computationally inexpensive enough to be implemented
in real time on current control hardware.

A recently proposed paradigm at the boundary of system identification
and transfer learning theory \cite{pan2009survey} involves the estimation of system models using
a combination of historical/recorded data from a similar/auxiliary
model as well as the current model \cite{xin2023learning,9867413,lau2023multi,toso2023learning,zheng2020non,tu2022learning}. This method effectively
addresses the common problem of shortage/unavailability of data and/or
difficulties in I/O data collection for the current, to-be-controlled
system. The primary advantage of such a procedure is the increased
robustness of the parameter estimation process to noise due to increased
data size, while increased errors might result due to the mismatch between
the auxiliary and the current system. Estimates of finite sample errors in system identification problems in the context of LQR have been studied recently in \cite{dean2020sample}. However the only available techniques to handle such errors remain in the domain of robust and/or adaptive control \cite{zhou1998essentials,aastrom2013adaptive}. 

Recursive corrections to the model with streaming data and simultaneous control updates has also been investigated \cite{kim2008adaptive,tanaskovic2014adaptive,lorenzen2019robust,bujarbaruah2020adaptive}. Recently, machine learning methods such as exploration and exploitation to optimally combine the identification and control stages in the context of LQR control has been studied rigorously \cite{abbasi2011regret,abeille2017thompson,dean2018regret}.  However, these methods are predominantly model based and computationally expensive.

On the other hand, another recent paradigm in control (and model predictive
control in particular) is the use of I/O data directly to
design a controller without explicitly identifying the underlying
model \cite{willems2005note,de2019formulas,allibhoy2020data,berberich2020data,dai2023data}. These techniques have been widely studied for the implementation
of predictive control techniques such as receding horizon, iterative or infinite horizon
LQR \cite{rotulo2020data,da2018data}.

While both transfer learning and data driven controller synthesis are especially well suited for complex and time varying environments, their implementations with limited
computational resources, such as in robotics, cyber physical systems, networked control systems, automotive control etc, where
small onboard computers are common, are difficult due to the requirement
of relatively large computing power. Hence in this paper we propose a computationally efficient amalgamation of the above two ideas. We use randomized (Gaussian) data compression matrices to efficiently learn from ``large" historical I/O
databases \cite{kedia2023randomized,kedia2023fast} and simultaneously update this compressed data with online
or current data as they become available. The updating of the data
is done using a pure iteration with negligible computational cost.
Thereafter the updated data is used to directly compute a receding
horizon LQR control sequence without identifying any model. It turns
out that the LQR control sequence computation can also be implemented
as a pseudo-iteration based on the underlying iterative computation
of the compressed data matrices and a rank one update of an intermediate QR factorization \cite{golub2013matrix}. Due to the compression of the historical
(long) database to conveniently small sizes, the entire operation
is computationally extremely efficient. Moreover the iterative nature
of all the major computations makes it fast enough for real time implementations
in situations with limited computing resources.

The proposed algorithm is compared with a basic exploration-exploitation scheme, where the exploration stage consists of a minimum variance white noise excitation to identify the system. This experimental phase is conducted for the minimum duration required by popular subspace identification algorithms \cite{BDMoor_SID_book}. The identified model is used to construct a standard model based receding horizon LQR controller, whose performance cost is compared with that of the proposed algorithm. It turns out that the cost saved by the proposed method due to its inbuilt knowledge of the "similar" system, especially during the initial stages of controlled operation, enables it to vastly outperform such online identification approaches.  At the same time, except for the initial offline compression of similar system data, all other  online updates are extremely fast.

\section{Preliminaries and Problem Formulation}

Consider the following discrete LTI system 
\begin{equation}
\begin{aligned} & x(t+1)=Ax(t)+Bu(t)+e(t)\\
 & y(t)=Cx(t)
\end{aligned}
\label{eq_ssmodel}
\end{equation}

Here $e(t)\in\mathbb{R}^{n}$ is the process
noise and is assumed to be a white noise sequence with zero mean
and finite covariance i.e. $\mathbb{E}\{e(t_{1})e^{T}(t_{2})\}=\eta\delta_{t_{1}t_{2}}$
for all time instants $t_{1}$ and $t_{2}$, where, $\eta\in\mathbb{R}^{n\times n}>0$
and $\delta$ is the Kronecker delta function. The system parameters
$\{A,B,C\}$ are of appropriate dimensions: $A\in\mathbb{R}^{n\times n}$,
$B\in\mathbb{R}^{n\times m}$ and $C\in\mathbb{R}^{p\times n}$. We assume that the state information can be measured directly \cite{rotulo2020data} (i.e. $C = I_{n}$ and $p=n$). In the next subsection we briefly review the batch and receding horizon LQR formulations based on \cite{borrelli2017predictive}.

\subsection{Linear Quadratic Regulator}

Let us consider the deterministic version of the model \eqref{eq_ssmodel} for simplicity. We define the following quadratic cost function over a finite horizon
$k_{p}$ as: 
\begin{equation}\label{eq_costfunction}
J(x_{0},\mathcal{U}_{0})=x(k_{p})^{T}Px(k_{p})+\sum_{i=0}^{k_{p}-1}x(i)^{T}Qx(i)+u(i)^{T}Ru(i)
\end{equation}
where $Q=Q^{T}\succcurlyeq0\in\mathbb{R}^{n\times n}$, $P=P^{T}\succcurlyeq0\in\mathbb{R}^{n\times n}$
and $R\succ0\in\mathbb{R}^{m\times m}$, $x(0)=x_0$ and $\mathcal{U}_{0}=[u(0)\; u(1)\;...u(k_p-1)]^T$ .

Consider the finite time optimal control problem: 
\begin{equation}
\begin{aligned}\label{eq_optimal_cost}
J_{0}^{*}(x_{0}) & =\min_{\mathcal{U}_{0}}J(x_{0},\mathcal{U}_{0})\\
 & \text{s.t. }x(t+1)=Ax(t)+Bu(t)\hspace{0.2cm}\forall t=0,1,\hdots,k_{p}-1
\end{aligned}
\end{equation}

Next, we write the system evolution in terms of inputs and initial condition.

\begin{equation}\label{eq_LQRmatrixform}
\begin{aligned}\underbrace{\begin{bmatrix}x(0)\\
x(1)\\
\vdots\\
x(k_{p})
\end{bmatrix}}_{\mathcal{X}_{0}}=\underbrace{\begin{bmatrix}I\\
A\\
\vdots\\
A^{k_{p}}
\end{bmatrix}}_{\mathcal{S}_{*}^{x}}x(0)+\underbrace{\begin{bmatrix}0 & 0 & \hdots & 0\\
B & 0 & \ddots & 0\\
AB & \ddots & \ddots & \vdots\\
\vdots & \ddots & \ddots & 0\\
A^{k_{p}-1}B & \hdots & \hdots & B
\end{bmatrix}}_{\mathcal{S}_{*}^{u}}\underbrace{\begin{bmatrix}u(0)\\
u(1)\\
\vdots\\
u(k_{p}-1)
\end{bmatrix}}_{\mathcal{U}_{0}}\end{aligned}\end{equation}

where, $\mathcal{X}_{0}\in\mathbb{R}^{n(k_{p}+1)}$, $\mathcal{S}_{*}^{x}\in\mathbb{R}^{n(k_{p}+1)\times n}$,
$\mathcal{S}_{*}^{u}\in\mathbb{R}^{n(k_{p}+1)\times k_{p}m}$ and $\mathcal{U}_{0}\in\mathbb{R}^{k_{p}m}$.
The cost function $J(x_{0},\mathcal{U}_{0})$ can be rewritten as:
\begin{equation}\label{eq_optimalcostfunction_matrix}
J(x(0),\mathcal{U}_{0})=\mathcal{X}_{0}^{T}\bar{Q}\mathcal{X}_{0}+\mathcal{U}_{0}^{T}\bar{R}\mathcal{U}_{0}
\end{equation}
where $\bar{Q}=blockdiag\{Q,Q,\hdots,Q,P\}\in\mathbb{R}^{n(k_{p}+1)\times n(k_{p}+1)}$
and $\bar{R}=blockdiag\{R,R,\hdots,R\}\in\mathbb{R}^{k_{p}m\times k_{p}m}$.
The optimal input sequence is given by: 
\begin{equation}
\label{opt_input}
\mathcal{U}_{0}^{*}(x(0))=-(\mathcal{S}_{*}^{u^{T}}\bar{Q}\mathcal{S}_{*}^{u}+\bar{R})^{-1}\mathcal{S}_{*}^{u^{T}}\bar{Q}\mathcal{S}_{*}^{x}x(0)
\end{equation}
The optimal cost $J_{0}^{*}(x(0))$ can be easily calculated by substituting
the above result in \eqref{eq_optimalcostfunction_matrix}.

As opposed to the open loop formulation described above,  a similar but closed loop version of  \eqref{opt_input}  is used in the receding horizon framework. Here, an open-loop optimal control problem is
solved over a finite horizon of $k_{p}$ steps at each sampling instant. Only the first instance of the
optimal input sequence $(\mathcal{U}^{*})$ is applied to the process. At
the next sampling instant, a new optimal control problem is solved
over a shifted horizon of $k_{p}$-steps based on new measurements of the current states. The
optimal input sequence at $i^{th}$- sampling instant is given by:
\begin{equation}
\mathcal{U}_{i}^{*}(x(i))=-(\mathcal{S}_{*}^{u^{T}}\bar{Q}\mathcal{S}_{*}^{u}+\bar{R})^{-1}\mathcal{S}_{*}^{u^{T}}\bar{Q}\mathcal{S}_{*}^{x}x(i)\label{eq_optimalu}
\end{equation}

The above optimal input sequence works perfectly under the scenario
where we know the system dynamics, and hence   $\mathcal{S}_{*}^{x}$ and $\mathcal{S}_{*}^{u}$, exactly. 

\subsection{Problem Formulation}

Let the dynamics of a similar system be given by 
\begin{equation}
\begin{aligned} & x(t+1)=\tilde{A}x(t)+\tilde{B}u(t) + e(t)\\
 & y(t)=x(t)
\end{aligned}
\label{ss_similarmodel1}
\end{equation}
where, $\tilde{A}=A+\Delta A$, $\tilde{B}=B+\Delta B$, the pair $(A, B)$ is our actual system while the pair $(\tilde{A}, \tilde{B})$ represents the similar system with ($\Delta A, \Delta B$) being the unknown perturbation matrices. The other variables in the equation are identical to those defined in the original system \eqref{eq_ssmodel}. We assume that we have access to the I/O data generated previously by the above system, either through separate identification experiments or through regular controlled use. The generated I/O data set is denoted by $\{\tilde{u}(i),\tilde{y}(i)\}_{i=0}^{N_{t}-1}$.
\begin{prob}
\label{section:problem_statement} Design a data-driven ``efficient"
method to iteratively compute the receding LQR solution given in \eqref{eq_optimalu} by leveraging the data set $\{\tilde{u}(i),\tilde{y}(i)\}_{i=0}^{N_{t}-1}$ recorded from the similar system \eqref{ss_similarmodel1} and online data available at each iteration on application of \eqref{eq_optimalu}  on the 
actual system \eqref{eq_ssmodel}.

\end{prob}
Some assumptions standard in the system identification literature are listed below. 
\begin{assumption}
    \label{assumption:system} \cite{Ljung_book,BDMoor_SID_book} The following are assumed:
\begin{enumerate}
\item The input $u(t)$ is persistently exciting. 
\item The input $u(t)$ is uncorrelated with $e(t)$. 
\item The signal to noise ratio is high.
\end{enumerate}
\end{assumption}
\section{Compressed Data Driven LQR synthesis}
In this section we present the two basic building blocks for the proposed method, namely the computation of $\mathcal{S}^{x}$ and $\mathcal{S}^{u}$ directly from I/O data and the same computation based on compressed I/O data.
\subsection{Data driven learning of $\mathcal{S}^{x}$ and $\mathcal{S}^{u}$:
Subspace approach}

Given input-output time-series data sequence $\{u(i),y(i)\}$ $\forall i\in\{0,1,\hdots,N_{t}-1\}$.
First, a horizon $k = k_{p}+1$ is chosen where $k_{p}$ is the prediction horizon for the LQR problem and the block size $N:=N_{t}-2k+2$
is defined. Let us denote a block Hankel matrix based on input sequence
$\{u(i)\}$ as follows: 
\begin{equation}
	U_{i|i+k-1} :=\begin{bsmallmatrix}u(i) & u(i+1) & \hdots & u(i+N-1)\\
		u(i+1) & u(i+2) & \hdots & u(i+N)\\
		\vdots & \vdots & \ddots & \vdots\\
		u(i+k-1) & u(i+k) & \hdots & u(i+k+N-2)
	\end{bsmallmatrix}\label{hankel_matrix}
\end{equation}
Now, the past input block Hankel matrix is defined as $U_{p}:=U_{0|k-1}\in\mathbb{R}^{km\times N}$
by substituting $i=0$ above, while the corresponding future input
matrix is defined as $U_{f}:=U_{k|2k-1}\in\mathbb{R}^{km\times N}$
(substituting $i=k$). Similarly we define the output block Hankel
matrices $Y_{p},Y_{f}\in\mathbb{R}^{kn\times N}$ using the
past/future output data $\{y(i)\}$. Note that the nomenclature "past" and "future" is merely a mnemonic used in the subspace identification literature and has little to do with actual time instants. Although no recordings of noise are assumed to be available, for the sake of notational convenience, similar matrices are also defined for the corresponding past and future innovations
processes: $E_{p},E_{f}\in\mathbb{R}^{kn\times N}$. The past input and output data
is combined into $W_{p}:=\begin{bmatrix}U_{p}^{T} & Y_{p}^{T}\end{bmatrix}^{T}\in\mathbb{R}^{k(m+n)\times N}$. Let $X_{f}\in\mathbb{R}^{n\times N}$ denote the future state sequence
defined as $X_{f}:=\begin{bmatrix}x(k) & x(k+1) & \hdots & x(k+N-1)\end{bmatrix}\in\mathbb{R}^{n\times N}$. We further denote, $\Phi_{k}\in\mathbb{R}^{kn\times kn}$
as noise impulse response Toeplitz matrix as shown below. 
 \begin{equation*}
      \Phi_{k}=\begin{bmatrix}0 & 0 & \hdots & 0\\
I & 0 & \ddots & \vdots\\
\vdots & \ddots & \ddots & 0\\
A^{k-2} & \hdots & I & 0
\end{bmatrix}
 \end{equation*}
Recursively using \eqref{eq_ssmodel} and data matrices defined above
we get,
\begin{equation}
\begin{aligned} & Y_{f}=\mathcal{S}^{x}X_{f}+\begin{bmatrix}\mathcal{S}^{u} & 0_{m}\end{bmatrix}U_{f}+\Phi_{k}E_{f}.\end{aligned}
\label{Yfuture}
\end{equation}
where $0_{m}$ is zero matrix of size $km \times m$ and $\begin{bmatrix}\mathcal{S}^{u} & 0_{m}\end{bmatrix} \in \mathbb{R}^{kn \times km}$. Let $\bar{A}:=A-C$, $\bar{B}:=B$, $\Upsilon_{k}:=\begin{bmatrix}\bar{A}^{k-1}\bar{B} & \bar{A}^{k-2}\bar{B} & \hdots & \bar{B}\end{bmatrix}\in\mathbb{R}^{n\times km}$
be the modified reversed extended controllability matrix and $\Upsilon_{k}^{e}:=\begin{bmatrix}\bar{A}^{k-1} & \bar{A}^{k-2} & \hdots & I\end{bmatrix}\in\mathbb{R}^{n\times kn}$
be the modified reversed extended stochastic controllability matrix. Then, under the assumptions listed above and for large prediction
horizons $k$ it can be shown \cite{LjungSID} that $X_{f}=L_{p}W_{p}$
for $L_{p}:=\begin{bmatrix}{\Upsilon}_{k} & {\Upsilon}_{k}^{e}\end{bmatrix}\in\mathbb{R}^{n\times k(m+n)}$. Thereby  \eqref{Yfuture} reduces to 
\begin{equation}
Y_{f}=\mathcal{S}^{x}L_{p}W_{p}+\begin{bmatrix}\mathcal{S}^{u} & 0\end{bmatrix}U_{f}+\Phi_{k}E_{f}.\label{Yfuturesto}
\end{equation}
Under assumption \ref{assumption:system}, the projection $Y_{f}$ onto the joint span of $W_{p}$ and $U_{f}$ becomes:
	\begin{equation}\small
		\begin{aligned}Y_{f}/\begin{bmatrix}W_{p}\\
				U_{f}
			\end{bmatrix} =\mathcal{S}^{x}{L}_{p}W_{p}+\begin{bmatrix}\mathcal{S}^{u} & 0\end{bmatrix}U_{f}
		\end{aligned}
		\label{eq_Yfuturefinal}
	\end{equation}
Now $Y_{f}$ orthogonally projected onto the joint span of $W_{p}$
and $U_{f}$ can also be written as, 
\begin{equation}
\begin{aligned}Y_{f}/\begin{bmatrix}W_{p}\\
U_{f}
\end{bmatrix} & =Y_{f}/_{U_{f}}W_{p}+Y_{f}/_{W_{p}}U_{f}\\
 & =\underbrace{\bar{L}_{p}W_{p}}_{:=\zeta}+L_{U_{f}}U_{f}
\end{aligned}
\label{Yf_ortho}
\end{equation}
where $\zeta:=Y_{f}/_{U_{f}}W_{p}\in\mathbb{R}^{kp\times N}$ is known as the oblique projection of $Y_{f}$ onto $W_{p}$ along $U_{f}$. On comparing \eqref{eq_Yfuturefinal} and \eqref{Yf_ortho}
we get $\bar{L}_{p}=\mathcal{S}^{x}L_{p}$, $L_{U_{f}} = \begin{bmatrix}\mathcal{S}^{u} & 0\end{bmatrix}$ . Define $\zeta=\bar{L}_{p}W_{p}\in\mathbb{R}^{kp\times N}$.
An efficient way to calculate the oblique projection is by using the
$QR$ decomposition.
\subsubsection{QR step}

Perform LQ decomposition on $H:=\begin{bmatrix}U_{f}^{T} & W_{p}^{T} & Y_{f}^{T}\end{bmatrix}^{T}\in\mathbb{R}^{2k(m+n)\times N}$
to obtain the decomposition of $Y_{f}$ as shown in \eqref{Yfuturesto}.
\begin{equation}\label{eq_Hmatrix}
\begin{aligned}H & =\begin{bmatrix}U_{f}\\
W_{p}\\
Y_{f}
\end{bmatrix} & =\underbrace{\begin{bmatrix}R_{11} & 0 & 0\\
R_{21} & R_{22} & 0\\
R_{31} & R_{32} & 0
\end{bmatrix}}_{L}\begin{bmatrix}Q_{1}^{T}\\
Q_{2}^{T}\\
Q_{3}^{T}
\end{bmatrix}\end{aligned}
\end{equation}
From \eqref{eq_Hmatrix}, 
\begin{equation}
Y_{f}/\begin{bmatrix}W_{p}\\
U_{f}
\end{bmatrix}=R_{32}R_{22}^{\dagger}W_{p}+(R_{31}-R_{32}R_{22}^{\dagger}R_{21})R_{11}^{-1}U_{f}\label{eq_Yf_ortho_data}
\end{equation}
On comparing \eqref{eq_Yfuturefinal} and \eqref{eq_Yf_ortho_data}, 
\begin{equation}
\begin{aligned}  \bar{L}_{p}&=R_{32}R_{22}^{\dagger}\\
  \begin{bmatrix}\mathcal{S}^{u} & 0\end{bmatrix} &=(R_{31}-R_{32}R_{22}^{\dagger}R_{21})R_{11}^{-1}
\end{aligned}
\label{zetaPsik}
\end{equation}
Now it can be shown using \eqref{Yfuture}, \eqref{eq_Yfuturefinal} and \eqref{Yf_ortho} that 
\begin{equation}
\underbrace{\mathcal{S}^{x}X_{f}}_{theoretical}=\bar{L}_{p}W_{p}=\underbrace{R_{32}R_{22}^{\dagger}W_p}_{data}=:\zeta.\label{eqzeta}
\end{equation}
\subsubsection{SVD step}

Next we calculate the SVD of $\zeta$ as follows:
	\begin{equation}
		\begin{aligned}\zeta & =\begin{bmatrix}U_{1} & U_{2}\end{bmatrix}\begin{bmatrix}\Sigma_{1} & 0\\
				0 & \Sigma_{2}
			\end{bmatrix}\begin{bmatrix}V_{1}^{T}\\
				V_{2}^{T}
			\end{bmatrix}\\
			& =U_{1}\Sigma_{1}V_{1}^{T}+\underbrace{U_{2}\Sigma_{2}V_{2}^{T}}_{noise}\\
			& \approx U_{1}\Sigma_{1}V_{1}^{T}=\underbrace{U_{1}\Sigma_{1}^{1/2}T}_{{\mathcal{S}^{x}}}T^{-1}\Sigma_{1}^{1/2}V_{1}^{T}
		\end{aligned}
		\label{eqzetasvd}
	\end{equation}
where $T \in \mathbb{R}^{n \times n}$ is an arbitrary similarity transformation matrix. The second term is ignored assuming that the noise component is negligible as compared to the system contribution. Therefore, $\mathcal{S}^{x}$ and $\mathcal{S}^{u}$ can be computed directly from input-output data up to similarity transforms. Consequently, the optimal input $\mathcal{U}_{*}$ using \eqref{eq_optimalu} becomes:
 \begin{equation}
\mathcal{U}_{i}^{*}(x(i))=-(\mathcal{S}^{u^{T}}\bar{Q}\mathcal{S}^{u}+\bar{R})^{-1}\mathcal{S}^{u^{T}}\bar{Q}\mathcal{S}^{x}T^{-1}x(i)\label{eq_optimalu_transformed}
\end{equation}
In order to use the above feedback law, we need to compute $T$. It is easy to see that on comparing $\mathcal{S}^{x}_{*}$ (see \eqref{eq_LQRmatrixform}) and $\mathcal{S}^{x}$, the first $n$-rows of $\mathcal{S}^{x}$ forms $T$ i.e. $T = \mathcal{S}^{x}(1:n,:)$.

\subsection{Efficient learning: Randomized approach}\label{section:effiecient_learning_randomized}

We propose randomized data compression based computation of  $\mathcal{S}^{x}$ and  $\mathcal{S}^{u}$.

\subsubsection{Data Compression:}

\label{subsection:SDC} Recall that the matrix $H:=\begin{bmatrix}U_{f}^{T} & W_{p}^{T} & Y_{f}^{T}\end{bmatrix}^{T}\in\mathbb{R}^{2k(m+n)\times N}$
and define $N_{c}:=2k(m+n)+l$ where, $l>0$ is commonly known as
	the oversampling parameter \cite{halko2011_findstructurerandom}. Define $\mathcal{C}\in\mathbb{R}^{N\times N_{c}}$
	to be a random matrix whose elements are iid gaussian with $(\mathcal{C})_{ij}\sim\mathcal{N}(0,\frac{1}{N_{c}})$.
We further define, $\bar{U}_{f}:=U_{f}\mathcal{C}$, $\bar{U}_{p}:=U_{p}\mathcal{C}$,
$\bar{Y}_{p}:=Y_{p}\mathcal{C}$, $\bar{Y}_{f}:=Y_{f}\mathcal{C}$,
$\bar{W}_{p}:=W_{p}\mathcal{C}=\begin{bmatrix}\bar{U}_{p}^{T} & \bar{Y}_{p}^{T}\end{bmatrix}^{T}$
and $\bar{H}:=H\mathcal{C}=\begin{bmatrix}\bar{U}_{f}^{T} & \bar{W}_{p}^{T} & \bar{Y}_{f}^{T}\end{bmatrix}^{T}\in\mathbb{R}^{2k(m+n)\times N_{c}}$.

\subsubsection{LQ Decomposition}\label{subsection:QRstep}

In the uncompressed case, the first step to compute  $\mathcal{S}^{x}$ and  $\mathcal{S}^{u}$ is to perform QR on $H^{T}$ to obtain $\mathcal{S}^{x}$
and $\mathcal{S}^{u}$ (see \eqref{eq_Hmatrix}). Since, we need only the $R$-factor from the
QR step, hence instead, we propose to perform QR on the compressed
matrix $\bar{H}^{T}\in\mathbb{R}^{N_{c}\times 2k(m+n)}$. This
step reduces the QR computation cost significantly since $N_{c}<<N$. However for this method to work, we must
show that the projection can still be used to extract the
desired subspace even after data compression.

First, note that an equation similar to \eqref{eq_Yf_ortho_data} can
be obtained for the compressed case using QR decomposition of $\bar{H}^{T}$:
\begin{equation}\label{eq_Yfbar_ortho}
\bar{Y}_{f}/\begin{bmatrix}\bar{W}_{p}\\
\bar{U}_{f}
\end{bmatrix}=\bar{R}_{32}\bar{R}_{22}^{\dagger}\bar{W}_{p}+(\bar{R}_{31}-\bar{R}_{32}\bar{R}_{22}^{\dagger}\bar{R}_{21})\bar{R}_{11}^{-1}\bar{U}_{f}\
\end{equation}
where $\bar{(.)}$ denote the equivalent matrices for the compressed
case. 
Now, we right multiply \eqref{Yfuturesto} by $\mathcal{C}$ 
\begin{equation}
\bar{Y}_{f}=\bar{L}_{p}\bar{W}_{p}+\begin{bmatrix}
    \mathcal{S}^{u} & 0
\end{bmatrix}\bar{U}_{f}+\Phi_{k}\bar{E}_{f}\label{eqYfinalOmega}
\end{equation}
Note that, the $\bar{L}_{p}$ remains the same as in the uncompressed case
due to multiplication from the right by $\mathcal{C}$ on \eqref{Yfuturesto}.
Under assumption \ref{assumption:system}, the orthogonal projection of $\bar{Y}_{f}$ onto the joint span of
$\bar{W}_{p}$ and $\bar{U}_{f}$ is 
\begin{equation}
\begin{aligned}\label{Ybarfinal}\bar{Y}_{f}/\begin{bmatrix}\bar{W}_{p}\\
\bar{U}_{f}
\end{bmatrix}
 & \approx\bar{L}_{p}\bar{W}_{p}+\begin{bmatrix}
    \mathcal{S}^{u} & 0
\end{bmatrix}\bar{U}_{f}
\end{aligned}
\end{equation}
The above result ensures that we can use the projection to extract the
desired subspace under assumption \ref{assumption:system} (similar
to the uncompressed case). Recall that the oblique projection $\zeta:=Y_{f}/_{U_{f}}W_{p}\in\mathbb{R}^{kp\times N}$
and define the compressed version $\bar{\zeta}:=\bar{Y}_{f}/_{\bar{U}_{f}}\bar{W}_{p}\in\mathbb{R}^{kp\times N_{c}}$.
Then the following result can be proved along the lines of \cite{kedia2023randomized,kedia2023fast} and is not included here in the interest of space.
\begin{lemma}
\label{Lemma_obliqueproj} $\mathcal{R}(\bar{\zeta})=\mathcal{R}(\zeta)$
almost surely.
\end{lemma}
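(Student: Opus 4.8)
The plan is to reduce the lemma to a purely algebraic fact about right multiplication by a Gaussian matrix, and then invoke a standard almost-sure rank argument. First I would observe that, under Assumption~\ref{assumption:system} (noise term negligible), the identity $\zeta=\bar{L}_{p}W_{p}$ established in \eqref{eqzeta}--\eqref{zetaPsik} has an exact compressed counterpart $\bar{\zeta}=\bar{L}_{p}\bar{W}_{p}$ in which, as already noted just below \eqref{eqYfinalOmega}, the coefficient $\bar{L}_{p}$ is \emph{unchanged} by the compression (right multiplication by $\mathcal{C}$ does not alter the row-space coefficients). Since $\bar{W}_{p}=W_{p}\mathcal{C}$, this immediately gives $\bar{\zeta}=\bar{L}_{p}W_{p}\mathcal{C}=\zeta\mathcal{C}$. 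Thus the lemma is equivalent to the claim $\mathcal{R}(\zeta\mathcal{C})=\mathcal{R}(\zeta)$ almost surely, where the randomness resides entirely in $\mathcal{C}$, the data (and hence $\zeta$) being fixed.

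One inclusion is free: every column of $\zeta\mathcal{C}$ is a linear combination of columns of $\zeta$, so $\mathcal{R}(\bar{\zeta})=\mathcal{R}(\zeta\mathcal{C})\subseteq\mathcal{R}(\zeta)$ deterministically. For the reverse inclusion it suffices to show that compression does not drop the rank, i.e. $\mathrm{rank}(\zeta\mathcal{C})=\mathrm{rank}(\zeta)=:r$ almost surely. I would prove this by passing to a thin SVD $\zeta=U_{r}\Sigma_{r}V_{r}^{T}$, with $V_{r}\in\mathbb{R}^{N\times r}$ having orthonormal columns spanning the row space of $\zeta$. Because $U_{r}\Sigma_{r}$ has full column rank $r$, one has $\mathrm{rank}(\zeta\mathcal{C})=\mathrm{rank}(V_{r}^{T}\mathcal{C})$, so it is enough to certify that the $r\times N_{c}$ matrix $V_{r}^{T}\mathcal{C}$ has full row rank.

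Here the Gaussian structure does the work. By rotational invariance of the i.i.d. Gaussian columns of $\mathcal{C}$ together with $V_{r}^{T}V_{r}=I_{r}$, the matrix $V_{r}^{T}\mathcal{C}$ again has i.i.d. $\mathcal{N}(0,1/N_{c})$ entries. Since $N_{c}=2k(m+n)+l\ge r$, such a Gaussian $r\times N_{c}$ matrix is rank-deficient only on the common zero set of its $r\times r$ minors; this set is a proper algebraic subvariety of the entry space and therefore has Lebesgue, hence Gaussian, measure zero. Consequently $V_{r}^{T}\mathcal{C}$ has rank $r$ almost surely, giving $\mathrm{rank}(\zeta\mathcal{C})=r$ and, with the trivial inclusion, $\mathcal{R}(\bar{\zeta})=\mathcal{R}(\zeta)$ almost surely.

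I expect the main obstacle to be the deterministic reduction of the first paragraph rather than the probabilistic core. Establishing $\bar{\zeta}=\zeta\mathcal{C}$ \emph{exactly} requires that the oblique projection along $\bar{U}_{f}$ recover precisely $\bar{L}_{p}\bar{W}_{p}$, which in turn needs (i) the high-SNR idealization that discards $\Phi_{k}\bar{E}_{f}$ in \eqref{eqYfinalOmega}--\eqref{Ybarfinal}, and (ii) that $\begin{bmatrix}\bar{W}_{p}^{T} & \bar{U}_{f}^{T}\end{bmatrix}^{T}$ retain full row rank after compression, so that the $\bar{W}_{p}$- and $\bar{U}_{f}$-row spaces meet only trivially. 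Condition (ii) is itself an almost-sure statement following from the very same minor argument applied to $\begin{bmatrix}W_{p}^{T} & U_{f}^{T}\end{bmatrix}^{T}$, whose full row rank is guaranteed by the persistency-of-excitation part of Assumption~\ref{assumption:system}, so it folds neatly into the same null set. An equally valid alternative route, which sidesteps forming $\zeta\mathcal{C}$, is to argue on ranges directly: $\mathcal{R}(\zeta)=\mathcal{R}(\bar{L}_{p}W_{p})=\mathcal{R}(\bar{L}_{p})$ since $W_{p}$ has full row rank, while $\mathcal{R}(\bar{\zeta})=\mathcal{R}(\bar{L}_{p}\bar{W}_{p})=\mathcal{R}(\bar{L}_{p})$ once $\bar{W}_{p}=W_{p}\mathcal{C}$ is shown to have full row rank almost surely; both equal $\mathcal{R}(\bar{L}_{p})$ and the claim follows.
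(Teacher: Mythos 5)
Your proof is correct. Note that the paper itself gives no argument for Lemma~\ref{Lemma_obliqueproj} --- it defers entirely to the cited prior work --- so yours is the only complete proof on the table, and it follows exactly the randomized-compression line of reasoning that that prior work (and Theorem~\ref{Theorem_thetak}) relies on: write the compressed object as the uncompressed object times the Gaussian matrix, then show right multiplication by $\mathcal{C}$ almost surely preserves rank. Both pillars of your argument check out. The reduction $\bar{\zeta}=\bar{L}_{p}\bar{W}_{p}=\zeta\mathcal{C}$ is legitimate precisely under the conditions you flag: the paper's own high-SNR idealization that discards $\Phi_{k}\bar{E}_{f}$ (this is how the paper itself passes from \eqref{eqYfinalOmega} to \eqref{Ybarfinal}, mirroring \eqref{eq_Yfuturefinal} in the uncompressed case), and almost-sure full row rank of the compressed stack $\begin{bmatrix}\bar{W}_{p}^{T} & \bar{U}_{f}^{T}\end{bmatrix}^{T}$ so that the oblique decomposition is unique; you correctly absorb the latter into the same Gaussian null set. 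The probabilistic core is also right: $V_{r}^{T}\mathcal{C}$ has iid $\mathcal{N}(0,1/N_{c})$ entries because the columns of $\mathcal{C}$ are isotropic Gaussian and $V_{r}^{T}V_{r}=I_{r}$; since $N_{c}=2k(m+n)+l\geq kn\geq r$, rank deficiency occurs only on a proper algebraic subvariety of measure zero, so $\mathrm{rank}(\zeta\mathcal{C})=\mathrm{rank}(\zeta)$ a.s., which together with the deterministic inclusion $\mathcal{R}(\zeta\mathcal{C})\subseteq\mathcal{R}(\zeta)$ and equality of dimensions gives the lemma. Your closing alternative --- identifying both $\mathcal{R}(\zeta)$ and $\mathcal{R}(\bar{\zeta})$ with $\mathcal{R}(\bar{L}_{p})$ using full row rank of $W_{p}$ (persistency of excitation) and of $\bar{W}_{p}=W_{p}\mathcal{C}$ (same null-set argument) --- is, if anything, the cleaner route, since it avoids the SVD detour. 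The one thing worth stating explicitly in a written-up version is that the equality of ranges is exact only under the noise-discarding idealization of Assumption~\ref{assumption:system}; with the noise term retained, the lemma (like \eqref{eq_Yfuturefinal} and \eqref{Ybarfinal} themselves) holds only approximately, which is consistent with how the paper presents these identities.
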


Therefore, various projections of $\bar{Y}_{f}$ onto $\bar{U}_{f}$
and $\bar{W}_{p}$ as shown in \eqref{Ybarfinal} can be calculated
from the compressed QR factors \big($\bar{R}_{ij}$ $\forall i,j\in\{1,2,3\}$, see \eqref{eq_Yfbar_ortho}\big).
Using the above Lemma and comparing \eqref{eq_Yf_ortho_data}
and \eqref{eq_Yfbar_ortho}, we get 
\begin{equation}
\bar{\zeta}=\underbrace{\bar{R}_{32}\bar{R}_{22}^{\dagger}}_{\bar{L}_{p}}\bar{W}_{p}\label{eq_zetabar}
\end{equation}

\subsubsection{Projection: SVD Step}

\label{subsection:SVDstep} As mentioned in previous subsection, after computing
the oblique projection $\zeta\in\mathbb{R}^{kp\times N}$ (see \eqref{eqzeta}),
the next step is to perform SVD according to \eqref{eqzetasvd}. Since,
we are interested to compute $\mathcal{S}^{x}$ using only the left singular
vectors of $\zeta$,
we show that an equivalent operation can be performed on $\bar{\zeta}$.
\begin{theorem}
\label{Theorem_thetak} There exists a decomposition of $\bar{\zeta}=\bar{L}_{p}\bar{W}_{p}=\bar{\mathcal{S}}^{x}X\in\mathbb{R}^{kp\times N_{c}}$
such that $\mathcal{R}(\bar{\mathcal{S}}^{x})=\mathcal{R}(\mathcal{S}^{x})$
a.s. where $\bar{\mathcal{S}}^{x}\in\mathbb{R}^{kp\times n}$ and $X\in\mathbb{R}^{n\times N_{c}}$.
\end{theorem}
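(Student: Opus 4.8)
The plan is to prove the result by identifying the column space of $\bar{\zeta}$ with that of $\mathcal{S}^{x}$ and then reading off $\bar{\mathcal{S}}^{x}$ from an exact rank-$n$ factorization of $\bar{\zeta}$. First I would record the relevant structure in the uncompressed case. From \eqref{eqzeta} we have $\zeta=\mathcal{S}^{x}X_{f}$ with $\mathcal{S}^{x}\in\mathbb{R}^{kp\times n}$ and $X_{f}\in\mathbb{R}^{n\times N}$. By its construction in \eqref{eqzetasvd}, $\mathcal{S}^{x}=U_{1}\Sigma_{1}^{1/2}T$ with $\Sigma_{1}$ and $T$ invertible, so $\mathcal{S}^{x}$ has full column rank $n$; under Assumption \ref{assumption:system} the state trajectory $X_{f}$ has full row rank $n$ (so that $\mathcal{R}(X_{f})=\mathbb{R}^{n}$). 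Consequently $\mathcal{R}(\zeta)=\mathcal{S}^{x}\mathcal{R}(X_{f})=\mathcal{R}(\mathcal{S}^{x})$ and $\mathrm{rank}(\zeta)=n$.

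Next I would transport this identity to the compressed object. Lemma \ref{Lemma_obliqueproj} gives $\mathcal{R}(\bar{\zeta})=\mathcal{R}(\zeta)$ almost surely, and combining it with the previous step yields $\mathcal{R}(\bar{\zeta})=\mathcal{R}(\mathcal{S}^{x})$ and hence $\mathrm{rank}(\bar{\zeta})=n$ almost surely. This is the crux: it certifies that the random compression by $\mathcal{C}$ does not collapse the $n$-dimensional range carried by $\zeta$.

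Finally I would construct the claimed decomposition by mirroring the SVD step \eqref{eqzetasvd} on the compressed data. Compute the thin SVD $\bar{\zeta}=\bar{U}_{1}\bar{\Sigma}_{1}\bar{V}_{1}^{T}$ retaining its $n$ nonzero singular values, and set $\bar{\mathcal{S}}^{x}:=\bar{U}_{1}\bar{\Sigma}_{1}^{1/2}\bar{T}$ and $X:=\bar{T}^{-1}\bar{\Sigma}_{1}^{1/2}\bar{V}_{1}^{T}$ for an arbitrary invertible $\bar{T}\in\mathbb{R}^{n\times n}$ (the analogue of the similarity transform $T$). Then $\bar{\zeta}=\bar{\mathcal{S}}^{x}X$ with the required dimensions $\bar{\mathcal{S}}^{x}\in\mathbb{R}^{kp\times n}$ and $X\in\mathbb{R}^{n\times N_{c}}$, and since $\bar{\Sigma}_{1}$ is invertible we get $\mathcal{R}(\bar{\mathcal{S}}^{x})=\mathcal{R}(\bar{U}_{1})=\mathcal{R}(\bar{\zeta})=\mathcal{R}(\mathcal{S}^{x})$ almost surely, which is the assertion.

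I expect the main obstacle to be rank preservation rather than the algebra of the factorization: the factor $\bar{\mathcal{S}}^{x}$ has exactly $n$ columns only because $\mathrm{rank}(\bar{\zeta})=n$, which is \emph{not} automatic since right-multiplication by a $N\times N_{c}$ random matrix could in principle lower the rank. The entire argument therefore leans on the almost-sure range equality of Lemma \ref{Lemma_obliqueproj} together with the noise-negligible approximation of \eqref{eqzetasvd} (so that $\bar{\zeta}$ is \emph{exactly}, not merely approximately, rank $n$). A secondary point worth stating explicitly is that the decomposition is non-unique through the free choice of $\bar{T}$, which is precisely the similarity-transform ambiguity already present in $\mathcal{S}^{x}$, and is harmless since only $\mathcal{R}(\bar{\mathcal{S}}^{x})$ is claimed.
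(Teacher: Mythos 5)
Your proof is correct, but there is nothing in the paper to compare it against in detail: the paper's entire ``proof'' of Theorem~\ref{Theorem_thetak} is the single sentence that it is similar to Theorem~2 of \cite{kedia2023fast}, i.e.\ a deferral to an external reference. Your self-contained argument is therefore genuinely useful, and it is assembled from exactly the machinery the paper itself sets up: the factorization $\zeta=\mathcal{S}^{x}X_{f}$ of \eqref{eqzeta}, full column rank of $\mathcal{S}^{x}$ and full row rank of $X_{f}$, Lemma~\ref{Lemma_obliqueproj} for almost-sure preservation of the range under right-multiplication by the random matrix $\mathcal{C}$, and a thin SVD of $\bar{\zeta}$ mirroring \eqref{eqzetasvd} with the free invertible factor $\bar{T}$ reproducing the similarity-transform ambiguity. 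You also correctly isolate where the probabilistic content lives (the rank/range preservation supplied by Lemma~\ref{Lemma_obliqueproj}) and where it does not (the deterministic rank-$n$ factorization). Two points to tighten if this argument were inserted in place of the citation. First, full row rank of $X_{f}$ does not follow from persistency of excitation of the input alone; it also requires reachability (minimality) of the realization, which is implicit in the subspace-identification framework the paper inherits but is worth stating, since without it $\mathcal{R}(\zeta)=\mathcal{S}^{x}\mathcal{R}(X_{f})$ could be a proper subspace of $\mathcal{R}(\mathcal{S}^{x})$ and the claimed range equality would fail. Second, as you flag yourself, the identity $\zeta=\mathcal{S}^{x}X_{f}$, and hence the exact rank $n$ of $\bar{\zeta}$, holds only under the paper's standing approximation that the noise contribution $\Phi_{k}E_{f}$ is negligible; your proof inherits this approximation in precisely the same way the paper's own development does, so this is consistent rather than a gap, but the almost-sure statement should be read as conditional on that approximation.
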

\begin{proof}
The proof is similar to Theorem 2 in \cite{kedia2023fast} 
\end{proof}
Therefore SVD can be performed on $\bar{\zeta}$ to compute $\mathcal{S}^{x}$ upto similarity transform. This step reduces the computation-cost as we use $\bar{\zeta}\in\mathbb{R}^{kp\times N_{c}}$
to compute $\bar{\mathcal{S}}^{x}$ instead of $\zeta\in\mathbb{R}^{kp\times N}$. The matrix $\mathcal{S}^{u}$ can be easily estimated using \eqref{eq_Yfbar_ortho} 
\begin{equation}\label{eq_Psikbar}
    \begin{bmatrix}\mathcal{S}^{u} & 0\end{bmatrix} = (\bar{R}_{31}-\bar{R}_{32}\bar{R}_{22}^{\dagger}\bar{R}_{21})\bar{R}_{11}^{-1}
\end{equation}
The above Theorem proves that even after data compression we can use appropriate projections to
learn $\mathcal{S}^{x}$ and $\mathcal{S}^{u}$ up to similarity transforms. 

\section{Randomized Iterative LQR (RiLQR)}
The methods described in the previous section allows for efficient computation of the LQR control directly from large quantities of I/O data. This efficiency is essential to learn from the potentially large sized similar system data. However we would like to continuously update the optimal control based on the online data generated in real time from the actual system being controlled. This calls for a iterative method to update $\mathcal{S}^{x}$, $\mathcal{S}^{u}$ and ideally $\mathcal{U}_{i}^{*}(x(i))$ in \eqref{eq_optimalu_transformed}  based on the newly available data at each time instant. We propose a method to partially achieve this objective in this section.

\subsection{Learning control from similar system}

Recall the dynamics of the similar system given in \eqref{ss_similarmodel1} which has previously generated the I/O data set denoted by $\{\tilde{u}(i),\tilde{y}(i)\}_{i=0}^{N_{t}-1}$.
We formulate data matrices from this I/O data, denote them as ${U}_{p(-1)}$, ${U}_{f(-1)}$,
${Y}_{p(-1)}$, ${Y}_{f(-1)}$ and define $H_{-1} := \begin{bmatrix}{U}_{f(-1)}^{T} & {W}_{p(-1)}^{T} & {Y}_{f(-1)}^{T}\end{bmatrix}^{T}\in\mathbb{R}^{2k(m+n)\times N}$. The ${U}_{p(-1)}$ and ${U}_{f(-1)}$ using data from similar system can be represented as:
\begin{equation*}
	{U}_{p(-1)} :=\begin{bsmallmatrix}\tilde{u}(0) & \tilde{u}(1) & \hdots & \tilde{u}(N-1)\\
		\tilde{u}(1) & \tilde{u}(2) & \hdots & \tilde{u}(N)\\
		\vdots & \vdots & \ddots & \vdots\\
		\tilde{u}(k-1) & \tilde{u}(k) & \hdots & \tilde{u}(k+N-2)
	\end{bsmallmatrix}
\end{equation*}
\begin{equation*}
	{U}_{f(-1)} :=\begin{bsmallmatrix}\tilde{u}(k) & \tilde{u}(k+1) & \hdots & \tilde{u}(k+N-1)\\
		\tilde{u}(k+1) & \tilde{u}(k+2) & \hdots & \tilde{u}(k+N)\\
		\vdots & \vdots & \ddots & \vdots\\
		\tilde{u}(2k-1) & \tilde{u}(2k) & \hdots & \tilde{u}(2k+N-2)
	\end{bsmallmatrix}
\end{equation*}
Let us define $\tilde{u}_{p(-1)}$ as the last column of
${U}_{p(-1)}$ so that $\tilde{u}_{p(-1)}= \begin{bsmallmatrix}
\tilde{u}(N-1)\\ \tilde{u}(N) \\ \vdots \\\tilde{u}(k+N-2)
\end{bsmallmatrix}$; and $\tilde{u}_{f(-1)}$ as the last column of ${U}_{f(-1)}$ so that  $\tilde{u}_{f(-1)}= \begin{bsmallmatrix}
\tilde{u}(k+N-1)\\ \tilde{u}(k+N) \\ \vdots \\\tilde{u}(2k+N-2)
\end{bsmallmatrix}$. Similarly, $\tilde{y}_{p(-1)}$
and $\tilde{y}_{f(-1)}$ can be defined. Therefore the last column of $H_{-1}$
is defined as $h_{-1} :=\begin{bsmallmatrix}\tilde{u}_{f(-1)}\\
\tilde{u}_{p(-1)}\\
\tilde{y}_{p(-1)}\\
\tilde{y}_{f(-1)}
\end{bsmallmatrix} \in \mathbb{R}^{2k(m+n)}$.

Following the same notation as in section \ref{section:effiecient_learning_randomized}, the compressed version of $H_{-1}$ is denoted by $\bar{H}_{-1} = H_{-1}\mathcal{C}_{-1}$. Using this $\bar{H}_{-1}$ and our knowledge of the initial state $x(0)$ of the to-be-controlled system \eqref{eq_ssmodel},  the optimal control $\mathcal{U}_{0}^{*}(x(0))$ can be computed using  $\mathcal{S}^{x}$ and  $\mathcal{S}^{u}$ as shown above. The first instance of this control input sequence (say  $u^{*}(0)$) is applied at the starting instant $t=0$ of the controlled operation of system \eqref{eq_ssmodel}. The output $y(0)$ is recorded.

\subsection{Update of H}\label{section:updateh}

Assume that we receive the new data $(u^{*}(0),y(0))$ generated at time $t=0$ from the actual system. This
new information can be used to augment the I/O data matrices as shown below: 
\begin{equation*}
	{U}_{p(0)} :=\begin{bsmallmatrix}\tilde{u}(0) & \tilde{u}(1) & \hdots & \tilde{u}(N-1)&\textcolor{red}{\tilde{u}(N)}\\
		\tilde{u}(1) & \tilde{u}(2) & \hdots & \tilde{u}(N)& \textcolor{red}{\tilde{u}(N+1})\\
		\vdots & \vdots & \ddots & \vdots & \textcolor{red}{\vdots}\\
		\tilde{u}(k-1) & \tilde{u}(k) & \hdots & \tilde{u}(k+N-2)&\textcolor{red}{\tilde{u}(k+N-1)}
	\end{bsmallmatrix}
\end{equation*}
\begin{equation*}
	{U}_{f(0)} :=\begin{bsmallmatrix}\tilde{u}(k) & \tilde{u}(k+1) & \hdots & \tilde{u}(k+N-1)&\textcolor{red}{\tilde{u}(k+N)}\\
		\tilde{u}(k+1) & \tilde{u}(k+2) & \hdots & \tilde{u}(k+N)&\textcolor{red}{\tilde{u}(k+N+1)}\\
		\vdots & \vdots & \ddots & \vdots&\textcolor{red}{\vdots}\\
            \tilde{u}(2k-2) & \tilde{u}(2k-1) & \hdots & \tilde{u}(2k+N-3)&\textcolor{red}{\tilde{u}(2k+N-2)}\\
		\tilde{u}(2k-1) & \tilde{u}(2k) & \hdots & \tilde{u}(2k+N-2)&\textcolor{blue}{{u}^{*}(0)}
	\end{bsmallmatrix}
\end{equation*}
Similarly $Y_{p(0)}$ and $Y_{f(0)}$ can also be augmented. 
So, the updated new column to be added to $H_{-1}$ can be written as $h_{0}=\begin{bmatrix}u_{f0}\\
u_{p0}\\
y_{p0}\\
y_{f0}
\end{bmatrix}$ where $u_{f0}$ denotes the last column of $U_{f(0)}$ and so on.
Therefore, $H_0 = \begin{bmatrix}
    H_{-1} & \textcolor{blue}{h_{0}}
\end{bmatrix}$. Now following the same notation as in section \ref{section:effiecient_learning_randomized}, we perform data compression on $H_{0}$ leading to:
\begin{equation}\begin{aligned}\label{eq_barH_update}
    \bar{H}_{0} &= H_{0}\mathcal{C}_{0}= \begin{bmatrix}
    H_{-1} & \textcolor{blue}{h_{0}}
\end{bmatrix}\begin{bmatrix}
    \mathcal{C}_{-1}\\ \textcolor{purple}{c_{0}^{T}}\end{bmatrix}\\
    &= H_{-1}\mathcal{C}_{-1} + \textcolor{blue}{h_{0}}\textcolor{purple}{c_{0}^{T}}\\
    &=\bar{H}_{-1} + \textcolor{blue}{h_{0}}\textcolor{purple}{c_{0}^{T}}
\end{aligned}
\end{equation}
From time instant $t=1$  onwards the compressed data matrix at the $t^{th}$ instant can be obtained from the data matrix at the $(t-1)^{th}$ instant through an iteration identical to \eqref{eq_barH_update}  above. 
\begin{equation}\label{eq_barH_update1}
    \bar{H}_{t}=\bar{H}_{t-1} + h_{t}c_{t}^{T}
\end{equation}
where $h_t$ is the last column of $H_t$ augmented with the current data point $(u^{*}(t),y(t))$ as shown above, and $c_{t}$ is a iid Gaussian vector of appropriate size. Therefore the new information can be included by iterating with the previously compressed data matrix. Note that the size of the compressed matrix remains invariant even after addition of infinite data points. Though not pursued in this article, one may weigh new information differently than previously collected data by weighing the rank one update by a weighing factor $\gamma>0$ as in 
\begin{equation}\label{eq_barH_update1}
    \bar{H}_{t}=\bar{H}_{t-1} + \gamma h_{t}c_{t}^{T}.
\end{equation}

\subsection{Update on QR decomposition: Iterative rank-1 update}

 Once we have computed $\bar{H}_{t}$, the next step would be to perform QR decomposition ($\bar{H}_{t}^{T}$) followed by SVD and the subsequent steps described in the previous section. In this section, we show that the QR decomposition can be iterated directly without computing $\bar{H}_{t}$ explicitly. Clearly \eqref{eq_barH_update} is a rank-1 update on $\bar{H}_{t-1} \in \mathbb{R}^{s \times N_{c}}$ where $s:= 2k(m+n)$.  Therefore we can directly iterate on the $Q$ and $R$ factors.
 
Let the QR decomposition of $\bar{H}_{t-1}^{T} = Q_{t-1}R_{t-1}$ 
and $\bar{H}_{t} = \bar{H}_{t-1}^{T}+h_{t}c_{t}^{T}=Q_{t-1}(R_{t-1}+z_{t}c_{t}^{T})$
where $z_{t}=Q_{t-1}^{T}h_{t}$. We further define $L_{t}^{J}:=J_{1(i)}^{T}J_{2(i)}^{T}\hdots J_{(s-1)(i)}^{T}$ where $J_{(q)(i)}$ $\forall q \in  \{1, 2, \hdots, s-1\}$ are rotation matrices in planes $q$ and $q+1$ such that $L_{t}^{J}z_{t} = \pm ||z_{t}||_{2}e_{1}$. These rotation matrices can be easily computed using Algorithms 5.2.4 in \cite{golub2013matrix} and requires $\mathcal{O}(s^2)$ computation. It is easy to see that $M := L_{t}R_{t-1}$ is upper Hessenberg. Consequently, $L_{t}^{J}(R_{t-1} + z_{t}c_{t}^{T}) = M_{t-1} \pm ||z_{t}||_{2}e_{1}c_{t}^{T}$ is also upper Hessenberg. Next we define, $L_{t}^{G}:=G_{(s-1)(i)}^{T}G_{(s-2)(i)}^{T}\hdots G_{1(i)}^{T}$ where $G_{(q)(i)}$ $\forall q \in  \{1, 2, \hdots, s-1\}$ are rotation matrices such that $L_{t}^{G}(M_{t-1} \pm ||z_{t}||_{2}e_{1}c_{t}^{T})$ is upper triangular. Therefore the
iterative update on QR decomposition is given by: 
\begin{equation}\label{eq_qr_update}
\begin{aligned} & Q_{t}=Q_{t-1}Z_{t}\\
 & R_{t}=Z_{t}^{T}R_{t-1}+\Delta R_{t}
\end{aligned}
\end{equation}
where $Z_{t} = L_{t}^{G}L_{t}^{J}$ and $\Delta R_{t}=L_{t}^{G}||z_{t}||_{2}e_{1}c_{t}^{T}$. 

It is well known \cite{golub2013matrix} that the above rank-1 QR update requires $\mathcal{O}(s^2)$ computation. Hence, the above algorithm requires $\mathcal{O}(s^{2})$ computation at each iteration as
compared to $\mathcal{O}(s^{3}+s^2l)$ in case of repeated QR decomposition on $\bar{H}_{t}$. It is important to note that if we had not compressed $H_{t}$, then the required QR on $H_{t}^{T}$ would have required $\mathcal{O}(s^2N)$ computation at each instant making it prohibitively large due to $N >> s$. Moreover, in such an uncompressed scenario, the data size ($N$ ) would grow with time making this computation even more intractable. Therefore, there is significant saving in computation in this step due to iterative compression and the rank-1 update of the compressed QR factors. 

The proposed pseudo-iterative control scheme (RiLQR) is summarized in two algorithms: Algorithm \ref{algo:initial_similar_system} describes the process of learning the initial optimal control by leveraging data from a similar system, while Algorithm \ref{algo:RiLQR}  iteratively learns the modified control by augmenting old information from the similar system with online data from the actual system.
 
\begin{algorithm}[h]
\label{algo:initial_similar_system} \textbf{Input:} Load input-output data from
similar system.\\
 Formulate data matrices $U_{p(-1)}$, $U_{f(-1)}$, $Y_{p(-1)}$ and $Y_{f(-1)}$
from input-output data. Stack them to form $H_{-1}=\begin{bmatrix}U_{f(-1)}^{T} & U_{p(-1)}^{T} & Y_{p(-1)}^{T} & Y_{f(-1)}^{T}\end{bmatrix}^{T}$
matrix.\\
Store the last column of $H_{-1}$ as $h_{-1}=H_{-1}(:,end)$.
\\
 Generate random Gaussian iid matrix $\mathcal{C}_{-1}\in\mathbb{R}^{N\times N_{c}}$.\\
 Perform randomized data compression on $H_{-1}$ using $\mathcal{C}_{-1}$ defined as $\bar{H}_{-1}:=H_{-1}\mathcal{C}_{-1}$.\\
 Perform QR decomposition on $\bar{H}_{-1}^{T}=Q_{-1}R_{-1}$.\\
 Extract $\bar{L}_{p}$ and $\mathcal{S}_{-1}^{u}$ from $R_{-1}$
obtained above using \eqref{eq_zetabar} and \eqref{eq_Psikbar} respectively.
\\
 Perform SVD using $\bar{L}_{p}$ and $\bar{W}_{p}$ to
estimate $\bar{\mathcal{S}}_{-1}^{x}$ \eqref{eqzetasvd}.\\
 Compute $\mathcal{U}_{-1}^{*}$ using \eqref{eq_optimalu_transformed}. Store
first input sequence from $\mathcal{U}_{-1}^{*}$ as $u^{*}(0)$.\\
 \textbf{Output:} $\mathcal{S}_{-1}^{x}$, $\mathcal{S}_{-1}^{u}$,
$Q_{-1}$, $R_{-1}$, $h_{-1}$ and $u^{*}(0)$. \caption{Learning control from similar system}
\end{algorithm}

\begin{algorithm}[h]
\label{algo:RiLQR} Initialize: $t\gets0$\\
 \While{$t\neq{T_{iter}}$}{ \textbf{Input:} $\mathcal{S}_{t-1}^{x}$,
$\mathcal{S}_{t-1}^{u}$, $Q_{t-1}$, $R_{t-1}$, $h_{t-1}$ and $u^{*}(t)$.\\
 Generate output data $y(t)$ using $u^{*}(t)$ from actual system
$(A,B)$. Record the I/O pair as $(u^{*}(t),y(t))$.\\
 Update $h_{t}$ based on $ h_{t-1}$ using current measurement from actual
system $(u^{*}(t),y(t))$ (see section \ref{section:updateh}).\\
 Generate random Gaussian vector $c_{t}$.\\
 Iteration on QR decomposition to obtain $R_{t}$ using $Q_{t-1}$,
$R_{t-1}$, $h_{t}$ and $c_{t}$ using \eqref{eq_qr_update}. \\
  From $R_{t}$ obtained above compute $\mathcal{S}_{t}^{x}$ and
$\mathcal{S}_{t}^{u}$ as performed in Algorithm \ref{algo:initial_similar_system}.\\
 Compute $\mathcal{U}_{t}^{*}$ based on $\mathcal{S}_{t}^{x}$
and $\mathcal{S}_{t}^{u}$ using \eqref{eq_optimalu_transformed}. Store first sequence from $\mathcal{U}_{t}^{*}$
denoted by $u^{*}(t+1)$\\
 \textbf{Output:}. $\mathcal{S}_{t}^{x}$, $\mathcal{S}_{t}^{u}$,
$Q_{t}$, $R_{t}$, $h_{t}$ and $u^{*}(t+1)$.\\
 $t\gets t+1$\;} \caption{Learning control from similar system and actual system }
\end{algorithm}

\subsection{Efficiency of the proposed algorithm}\label{section:algo_efficiency_RiLQR}
The computation complexity of the proposed algorithm is presented in Table \ref{tab:computation_complexity}. The main computation burden of the proposed algorithm is due to initial data compression step i.e. matrix multiplication of $H_{-1}$ and $\mathcal{C}_{-1}$ (see step 5 in Algorithm \ref{algo:initial_similar_system}) that requires $\mathcal{O}(k^2N)$ computation. There are two ways to address this issue:
\begin{enumerate}
    \item Offline computation of $\bar{H}_{-1}$: Since the data was generated from a similar system, hence the compression can be done before the start of real time operation on the actual system.
    \item Matrix multiplication is a highly parallel operation hence can be implemented efficiently.
\end{enumerate}
Except the initial data compression step, all other steps of the proposed algorithm is $\mathcal{O}(k^2N_{c}) \approx \mathcal{O}(k^3)$.
\begin{table}[h]
\begin{center}
\caption{Computation complexity of RiLQR}
\label{tab:computation_complexity}
\begin{tabular}{|c|c|c|}
\hline
\textbf{}  & \textbf{Steps}                                                                                               & \textbf{Proposed}   \\ \hline
\textbf{1} & \begin{tabular}[c]{@{}c@{}}Initial data compresssion\\ $\bar{H}_{-1} = H_{-1} \mathcal{C}_{-1}$\end{tabular} & $\mathcal{O}(k^2N)$ \\ \hline
\textbf{2} & \begin{tabular}[c]{@{}c@{}}Initial QR decomposition\\ $\bar{H}_{-1} = Q_{-1} R_{-1}$\end{tabular} & $\mathcal{O}(k^3)$ \\ \hline
\textbf{3} & Rank-1 QR update                                                                                             & $\mathcal{O}(k^2)$  \\ \hline
\textbf{4} & $\mathcal{S}^{x}$                                                                                                         & $\mathcal{O}(k^3)$  \\ \hline
\textbf{5} & $\mathcal{S}^{u}$                                                                                            & $\mathcal{O}(k^3)$  \\ \hline
\textbf{6} & $\mathcal{U}^{*}$                                                                                            & $\mathcal{O}(k^3)$  \\ \hline
\end{tabular}
\end{center}
\end{table}

\section{Numerical Example}

We illustrate the advantage of using historical data set obtained from a similar system in the proposed method, over the standard exploration-exploitation approach, where the system is unknown. Consider a second order, discrete-time system whose matrices are given by
\begin{equation*}\begin{aligned}
    A & =\begin{bmatrix} 1.0 & 0.40\\
0.005 & -0.99\\
\end{bmatrix}; B=\begin{bmatrix}0.2\\
0.5\\
\end{bmatrix};
C=I_{2}
\end{aligned}
\end{equation*}

Firstly, as part of the exploratory phase in the standard approach, a Gaussian iid white noise sequence of variance $\sigma^{2}_{u_E}$ is used to drive the system forward in open loop through $T_{explore}= 50$ time-steps, and the generated I/O data set $\{\tilde{u}(i),\tilde{y}(i)\}_{i=0}^{50}$ is used to identify the system using the Multivariable Subspace Identification: MOESP \cite{MOESP_Verhaegan} algorithm. This algorithm requires the creation of the Hankel matrices (see \eqref{hankel_matrix}) The width of these matrices depend on the length of the collected data. Since we would like to keep the exploration phase to be as short as possible to keep down the LQR cost, we choose $T_{explore}= 50$ to make the data matrix $H$ approximately square (see \eqref{eq_Hmatrix}).  
The variance of the noise sequence  $e(t)$ is kept around $0.01$ in all the experiments. The model based LQR controller uses the estimated system matrices to compute the optimal feedback gain  \cite{borrelli2017predictive} with a prediction horizon $k_p=4$  and is run in closed loop for $T_{exploit}= 150$ time-steps as part of the exploitation phase. The  $T_{exploit}= 150$ is kept sufficiently long for the system to become stable.  The total run time is: $T_{iter} = T_{explore} + T_{exploit}$. In order to apply the proposed methodology, we set a parallel
open-loop experiment to collect the historical data with a similar system with perturbed eigenvalues, given as
\begin{equation*}
\tilde{A} =\begin{bmatrix} 0.80 & 0.30\\
0.105 & -0.89\end{bmatrix}\hspace{0.3cm}
\tilde{B}=\begin{bmatrix}0.21\\
0.6\\
\end{bmatrix}
\end{equation*}

The other system matrices are kept the same as in the original system. The eigenvalues of the original system are $1.001$ and $-0.991$, and of the similar system are  are $0.818$ and $-0.908$. For the similar system experiment, without loss of any generality, the input signal is chosen to be a white Gaussian noise sequence with zero mean, covariance $\sigma^{2}_{ud}  = 1 $ with data length $N_{t} =$100,000. The I/O data is collected and stored for future use. 
Algorithm \ref{algo:initial_similar_system}  is applied on the similar system data collected and then the actual system $(A,B)$ is run in closed loop in conjunction with Algorithm \ref{algo:RiLQR} for the same $T_{iter}= 200$, where the RiLQR controller learns from the actual system outputs in real time and generates the optimal input sequence. The state and input penalty matrices are chosen as $Q = P = I_{2}$ and $R=1$ respectively.

For the proposed algorithm, the total optimal cost ($J_{RiLQR}$) is computed using (3) over $T_{iter}$ time-steps. The total cost for the model based LQR method (mLQR) is summed over the exploration and exploitation phases: $J_{mLQR} = J_{explore} + J_{exploit}$. 

Clearly the model based control cost $J_{explore}$ is dependent on both the duration of the explore phase as well as the variance of the input sequence used during the explore phase. The choice of  $T_{explore}= 50$  is explained above. The variance of the input noise during the explore phase, denoted by $\sigma^{2}_{u_E}$, also affects the accuracy of the estimated system parameters and hence the control. For very low  $\sigma^{2}_{u_E}$, the system is misidentified by the MOESP algorithm owing to low excitation; which is reflected in the worsening control performance for very low values of  $\sigma^{2}_{u_E}$. Hence a very low $\sigma^{2}_{u_E}$ may decrease the $J_{explore}$ but will increase the $J_{exploit}$. We repeat the model based experiment described above for several values of $\sigma^{2}_{u_E}$. It can be seen from Table \ref{tab:template}, that the variation of $J_{mLQR}$ vs  $\sigma^{2}_{u_E}$ is as expected. It is also clear that even the minimum value of  $J_{mLQR}$ ($=55.27$) achieved approximately with  $\sigma^{2}_{u_E}= 3.2\text{e-}5$, is far greater than that resulting from the proposed method. The input and the state trajectories for the model based control vs the proposed algorithm are shown  for two different values of  $\sigma^{2}_{u_E}$:  $\sigma^{2}_{u_E}= 0.83$ for Figure \ref{fig:systemresponse1} and  $\sigma^{2}_{u_E}= 9.5\text{e-}8$ for Figure \ref{fig:systemresponse2}.

\begin{table}[h]
\begin{center}
\caption{Cost and eigenvalue Variation with respect to input variance}
\label{tab:template}
\scalebox{0.85}{
\begin{tabular}{|c|c|c|c|c|c|c|}
\hline
\textbf{Cases} & \bm{$\sigma^{2}_{u_E}$} & \textbf{SNR} & \bm{$J_{explore}$} & \bm{$J_{exploit}$} & \bm{$J_{mLQR}$} & \bm{$J_{RiLQR}$} \\ \hline
\textbf{1}     &   1.15 & 98.90 &    625.16 & 20.0 & 645.16 & 5.62\\ \hline
\textbf{2}     &  0.83& 79.95&    232.18&16.17&248.36&6.25\\
\hline
\textbf{3}     &   0.08& 9.21&    79.83&3.15&82.98&4.72\\
\hline
\textbf{4}     &    0.001& 0.08&    53.05&2.92&55.97&5.69\\
\hline
\textbf{5}     &  3.2e-5& 3.7e-3&    53.25&2.02&55.27&5.40\\
\hline
\textbf{6}     &  9.5e-8& 8.4e-6&    28.60&55.68&84.29&5.05\\
\hline
\textbf{7}     &   1.1e-10& 1.1e-8&    49.89&72.96&122.86&5.71\\
\hline
\end{tabular}}
\end{center}
\end{table}


\begin{figure}[h]
\centering
\includegraphics[width=8cm]{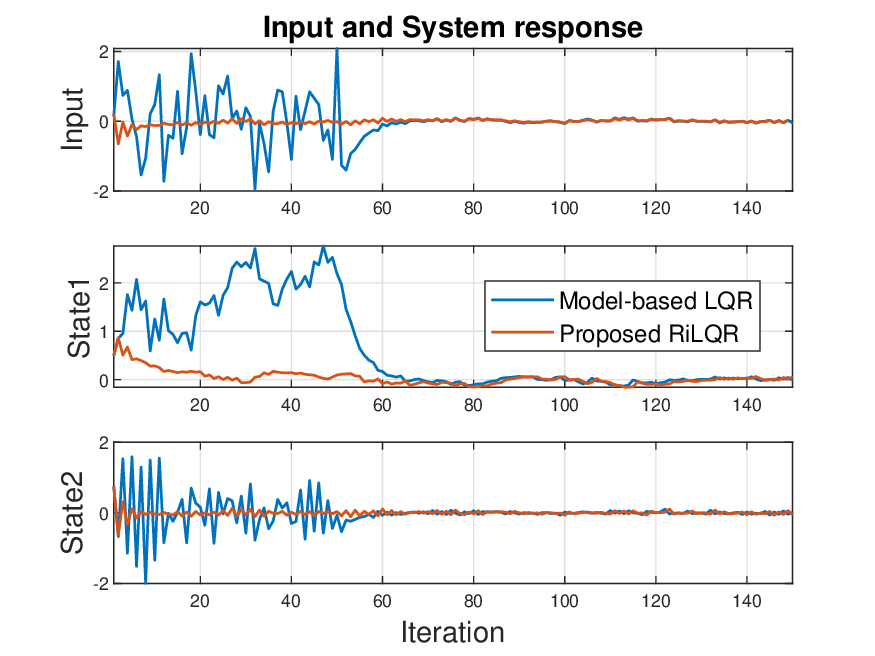}
\caption{State response comparison for mLQR and RiLQR for\\ $T_{iter}$ = 150 with SNR  = 79.95}
\label{fig:systemresponse1}
\end{figure}

\begin{figure}[h]
\centering
\includegraphics[width=8cm]{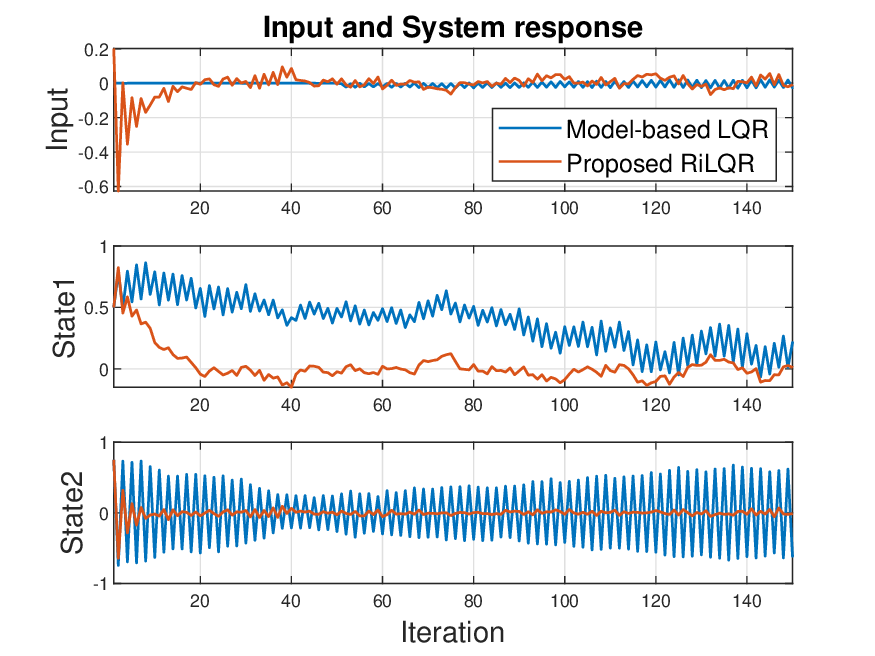}
\caption{State response comparison for mLQR and RiLQR for \\$T_{iter}$ = 150 with SNR = 8.4e-6}
\label{fig:systemresponse2}
\end{figure}

\section{CONCLUSIONS}

A data driven iterative LQR which seamlessly combine past and online data through randomized compression, is presented in this paper. While not included in this preliminary study, it is expected that discrepancies with the historical model and/or rapid real time variations in model parameters will adversely affect the efficacy of the computed control sequence. Convergence proofs and performance guarantees will require restrictions on the admissible model variations, while constraints on the states/inputs might be challenging to handle in a truly iterative framework. The performance of the proposed algorithm vs more sophisticated exploitation-exploration scheme should be investigated in the future.

\addtolength{\textheight}{-10cm} 
\bibliographystyle{ieeetr}
\bibliography{ifacconf}

\end{document}